\title{\hspace*{-27.55pt}\mbox{The~Approximation~Ratio~of~the~Greedy~Algorithm} for the Metric Traveling Salesman Problem}
\author{Judith Brecklinghaus and Stefan Hougardy}
\institute{Research Institute for Discrete Mathematics,
           University of Bonn\\
           Lenn\'estr.~2, 53113 Bonn, Germany\\[5mm]
           \today}
\begin{document}
\maketitle 

\begin{abstract}
We prove that the approximation ratio of the greedy algorithm for the metric
Traveling Salesman Problem is $\Theta(\log n)$. 
Moreover, we prove that the same result also holds for graphic, euclidean, and
rectilinear instances of the Traveling Salesman Problem.   
Finally we show that the approximation ratio of the Clarke-Wright savings heuristic
for the metric Traveling Salesman Problem is $\Theta(\log n)$.
\end{abstract}

{\small\textbf{keywords:} traveling salesman problem; greedy algorithm, Clarke-Wright savings heuristic, approximation algorithm}

\section{Introduction}

Let $G=(V,E)$ be an undirected complete graph and $c:E(G)\to \mathbb{R}_{\ge 0}$ be a length function on the edges. 
A \emph{tour} is a cycle in $G$ that visits each vertex exactly once. The \emph{length} of a tour $T$
is the sum of the lengths of the edges in $T$. The Traveling Salesman Problem (TSP) is to find a tour of minimum length.
In this paper we study the \emph{metric} Traveling Salesman Problem which is the special version of the 
Traveling Salesman Problem where the function $c$ is metric, i.e., we have $c(\{x,y\}) + c(\{y,z\}) \ge c(\{x,z\})$
for any three vertices $x,y,z\in V(G)$.  
The metric Traveling Salesman Problem is known to be NP-hard~\cite{Kar1972} 
and therefore much effort has been spent to design polynomial time algorithms that find good tours.

An algorithm $A$ for the traveling salesman problem has \emph{approximation ratio}
$c$ if for every TSP instance it finds a tour that is at most 
$c$ times longer than a shortest tour. 
The \emph{greedy algorithm} is one of the simplest algorithms to find a TSP tour. It starts
with an empty edge set and adds in each iteration the cheapest edge (with respect to the length function $c$),
such that the resulting graph is a subgraph of a tour. 
Because of its simplicity 
and because the greedy algorithm achieves quite good results on real world instances (see e.g.~\cite[page 98]{Rei1994}) 
it is often used in practice. 

In Section~\ref{Sec:Greedy} we will prove that the approximation ratio of the greedy algorithm for
metric TSP is $\Theta(\log n)$ for instances with $n$ vertices. Our result closes the long-standing
gap between the so far best known lower bound of $\Omega(\log n/ \log\log n)$ and the upper bound of $O(\log n)$.
Both these bounds were proved in 1979 by Frieze~\cite{Fri1979}.

Our result also holds for the \emph{euclidean}, the \emph{rectilinear}, and the \emph{graphic} TSP.
These are well studied special cases of the metric TSP. In the euclidean and the rectilinear TSP
the cities are points in the plane and the distance between two cities is defined as the euclidean respectively
rectilinear distance. A graphic TSP is obtained from an unweighted, undirected, and 
connected graph $G$ which has as vertices all the cities. 
The distance between two cities is then defined as the length of a shortest path
in $G$ that connects the two cities.

Another well established approximation algorithm for the traveling salesman problem 
that achieves good results in practice (see e.g.~\cite[page 98]{Rei1994}) is the 
Clarke-Wright savings heuristic~\cite{CW1964}. This heuristic selects one city $x$ and connects it by two parallel edges 
to each other city. This way one obtains a Eulerian tour that is transformed into a TSP tour as follows.
For each pair of cities $a,b$ let the \emph{savings} be the amount by which the edge $\{a,b\}$ is shorter than 
the sum of the lengths of the two edges $\{a,x\}$ and $\{b,x\}$. The replacement of the edges 
$\{a,x\}$ and $\{b,x\}$ by the edge $\{a,b\}$ is called a \emph{short-cut}.
Now go through all city pairs in 
nonincreasing order of savings and short-cut the pair if this does not close a cycle on the cities different
from $x$ and if both cities in the pair do not get adjacent to more than two other cities.
After going through all city pairs the result will be a TSP tour.

In Section~\ref{Sec:CW} we will prove that the approximation ratio of the Clarke-Wright savings heuristic
for metric TSP is $\Theta(\log n)$. Again, we close a long-standing gap between the 
so far best known lower bound of $\Omega(\log n/ \log\log n)$ and the upper bound of $O(\log n)$.
Both these bounds were proved in 1984 by Ong and Moore~\cite{OM1984}.

\section{The Approximation Ratio of the Greedy Algorithm}\label{Sec:Greedy}

In this section we describe the construction of a family of metric TSP instances $G_k$ on which
the greedy algorithm can find a TSP tour that is by a factor of $\Omega(k)$ longer than an optimum tour.
Our construction is similar to the approach in~\cite{HW2014+}.

We denote by $V_k$ the set of cities in $G_k$. As $V_k$ we take
the points of a $2\times (\frac12 (3^{k+2}-1))$ subgrid of $\mathbb{Z}^2$.
Thus we have 
\begin{equation} \label{eq:Vk}
|V_k|~=~ 3^{k+2}-1 
\end{equation}

Let $G_k$ be any TSP-instance defined on the cities $V_k$ that satisfies the following conditions:
\begin{itemize}
\item[(i)] if two cities have the same x-coordinate or the same y-coordinate
           their distance is the euclidean distance between the two cities.
\item[(ii)] if two cities have different x-coordinate and  different y-coordinate
           then their distance is at least as large as the absolute difference between
           their x-coordinates.       
\end{itemize}

Note that if we choose as $G_k$ the euclidean or the rectilinear TSP instance on $V_k$ 
then conditions (i) and (ii) are satisfied.
We can define a graph on $V_k$ by adding an edge between each pair of cities
at distance 1. The graphic TSP that is induced by this graph is exactly the rectilinear TSP on $V_k$.
The graph for $G_0$ is shown in Figure~\ref{fig:G0}. 
We label the vertex at position $\frac12(3^{k+1}+1)$ in the top row of $V_k$ as $s_k$ 
and the right most vertex in the bottom row of $V_k$ as $r_k$.

\begin{figure}[ht]
\centering
\begin{tikzpicture}[scale=1]
\draw[step = 1cm] (0,0) grid (3,1); 
\draw[blue,very thick] (0,0) -- (0,1);
\draw[blue,very thick] (0,1) -- (1,1);
\draw[blue,very thick] (0,0) -- (1,0);
\draw[blue,very thick] (1,0) -- (2,0);
\draw[blue,very thick] (2,0) -- (2,1);
\draw[blue,very thick] (2,1) -- (3,1);
\draw[blue,very thick] (3,1) -- (3,0);
\foreach \x in {0,1,2,3}
	\foreach \y in {0,1}
		\fill (\x, \y) circle(1mm);

\draw (3,0) node[anchor = west] {$r_0$};
\draw (1,1) node[anchor = south] {$s_0$};
\end{tikzpicture}
\caption{The graph defining the graphic TSP $G_0$ together with a partial greedy tour (bold edges) that 
connects $s_0$ with $r_0$.}
\label{fig:G0}
\end{figure}
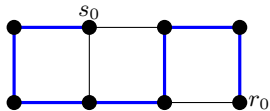

A TSP tour is called \emph{greedy tour} if this tour can be the output of the greedy algorithm.
Let $V$ be a subset of the cities $V_k$ of the instance $G_k$ and let $T$ be a greedy tour in $G_k$. 
A \emph{partial greedy tour} on $V$ is a path $P$ containing exactly the vertices of $V$ such that there exists
a greedy tour in $G_k$ that contains $P$.
The bold edges in Figure~\ref{fig:G0} form a partial greedy tour on $V_0$.
The next lemma is similar to Lemma~1 in~\cite{HW2004} and Lemma~1 in~\cite{HW2014+}.

\begin{lemma}
\label{lem:Gk}
Let the cities of $G_k$ be embedded into $G_m$ with $m \ge k$. Then there exists a partial greedy tour $P$
in $G_m$ that 
\begin{itemize}
\item[\rm(a)] contains exactly the cities in $G_k$,
\item[\rm(b)] connects $s_k$ and $r_k$, 
\item[\rm(c)] has length exactly $(2k+8)\cdot 3^k-1$, and
\item[\rm(d)] the edges in $P$ have length $3^i$ with $0\le i\le k$.
\end{itemize}
\end{lemma}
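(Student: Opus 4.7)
I would prove Lemma~\ref{lem:Gk} by induction on $k$. The base case $k=0$ is furnished by Figure~\ref{fig:G0}: the seven bold edges form a Hamiltonian path on $V_0$ from $s_0$ to $r_0$ of length $7 = (2\cdot 0+8)\cdot 3^0-1$, each edge of length $3^0=1$. Since $1$ is the globally minimum edge length in any $G_m$ containing $V_0$ (by condition~(i)) and the seven edges form a subgraph of a tour (maximum degree two, no proper cycle), some execution of greedy on $G_m$ selects them under suitable tiebreaking, so all of (a)--(d) hold.

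For the inductive step, I would use the arithmetic identity
\[
\tfrac{3^{k+2}-1}{2} \;=\; 3\cdot\tfrac{3^{k+1}-1}{2}+1
\]
to decompose $V_k$ into three consecutive column-blocks, each an isometric copy of $V_{k-1}$, together with one extra ``buffer'' column on top of which $s_k$ sits; a direct check shows that $r_k$ coincides with the local $r_{k-1}$ of the rightmost block. Applying the induction hypothesis to each of the three blocks (embedded into $G_m$ via the natural translations) produces three partial greedy tours $P_1,P_2,P_3$, each of length $(2k+6)\cdot 3^{k-1}-1$ and each using only edges of length $3^i$ with $i\le k-1$. I would then stitch $P_1,P_2,P_3$ and the two buffer vertices into a single path $P$ from $s_k$ to $r_k$ by adding four extra edges: two ``bridging'' edges of length $3^k$ (horizontal jumps across or into blocks) and two length-$1$ edges incident to the buffer. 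The total length of $P$ equals $3\bigl((2k+6)\cdot 3^{k-1}-1\bigr)+2\cdot 3^k+2 = (2k+8)\cdot 3^k-1$, so (c) holds, and (a), (b), (d) follow from the construction.

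The hardest step is certifying that $P$ is genuinely a \emph{partial greedy} tour, meaning some run of greedy on $G_m$ produces a tour containing $P$. I would analyze greedy scale by scale. At every scale $3^i$ with $i<k$, the length-$3^i$ edges of $P_1,P_2,P_3$ live in pairwise disjoint regions of $V_k$, so the inductively guaranteed choices within each block combine without conflict; the two length-$1$ buffer edges are picked at the length-$1$ stage under appropriate tiebreaking. At scale $3^k$ I must show that both intended bridging edges are greedy-legal and that, under a suitable tiebreak order, all other length-$3^k$ candidates are ruled out either by degree saturation or by the small-cycle rule. This is the main technical obstacle: the orientations of $P_1,P_2,P_3$ and the placements of the two buffer edges must be chosen so that, just before scale~$3^k$ is processed, the degree-$<2$ vertices in $G_m$ are exactly the four intended bridging endpoints together with $s_k$ and $r_k$, and so that---using conditions~(i)--(ii), which force every diagonal distance to be at least the horizontal gap---no length-$j$ edge with $j<3^k$ (in particular, no length-$2$ edge in the euclidean instance) is admissible between any pair of ``loose'' endpoints. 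This cycle-and-forcing analysis, rather than the bookkeeping of lengths and endpoints, is where the bulk of the technical work lies.
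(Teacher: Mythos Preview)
Your outline is the paper's proof: induction on $k$, decomposition of $V_k$ into three copies of $V_{k-1}$ plus one two-vertex buffer column (with $s_k$ on top), stitching the three inductive paths with two length-$1$ edges at the buffer's bottom vertex and two length-$3^k$ top-row bridges, and the identical length arithmetic $3\bigl((2k+6)3^{k-1}-1\bigr)+2\cdot 3^k+2=(2k+8)3^k-1$.

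One detail you leave vague is not a free parameter: the middle block must be \emph{reflected}, not merely translated. With ``natural translations'' alone, $r_{k-1}''$ sits at the right of block~2 rather than adjacent to the buffer, so neither length-$1$ edge reaches an endpoint of $P_2$; and the top-row edge from $s_{k-1}''$ to $s_{k-1}'''$ does not have length $3^k$. The paper mirrors the second copy at the $y$-axis precisely so that $r_{k-1}''$ lands next to the buffer and both bridging edges $\{s_{k-1}',s_k\}$, $\{s_{k-1}'',s_{k-1}'''\}$ come out to exactly $3^k$. You later write that ``the orientations of $P_1,P_2,P_3$ must be chosen'', which is the right instinct, but the orientation of block~2 is forced.

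This reflection is also what makes your ``main technical obstacle'' short rather than long. Once it is in place, the paper disposes of the greedy-validity check for $s_{k-1}'$ in one line: if $G_k$ has a neighbour to its left inside $G_m$, the recursive structure forces that neighbour to be a mirrored copy of $G_k$, so the nearest degree-$<2$ vertex to the left of $s_{k-1}'$ is the mirrored $s_{k-1}'$ itself, at horizontal distance $2\cdot\tfrac12(3^k+1)-1=3^k$. For $s_k$, $s_{k-1}''$, $s_{k-1}'''$ the bound is immediate from their column positions. No elaborate scale-by-scale forcing analysis is needed beyond this.
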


\begin{proof}
We use induction on $k$ to prove the lemma. For $k=0$ a
partial greedy tour of length $8\cdot 3^0-1=7$ that satisfies (a)--(d)
is shown in Figure~\ref{fig:G0}.
Now assume we already have defined a partial greedy tour for $G_k$. 
Then we define a partial greedy tour for $G_{k+1}$ recursively as follows.
By (\ref{eq:Vk}) we have
 $|V_{k+1}| = 3^{k+3}-1 = 3\cdot(3^{k+2}-1) + 2$. Therefore, 
we can think of $G_{k+1}$ to be the disjoint union of three copies $G_k'$, $G_k''$, and $G_k'''$ of $G_k$
such that $G_k'$ and $G_k''$ are separated by a $2\times 1$ grid. Moreover, we assume
that $G_k''$ is embedded after mirroring at the y-axis. This is shown in Figure~\ref{fig:Gk}.

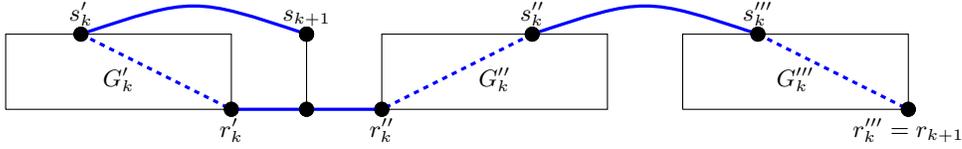
\begin{figure}[ht]
\centering
\begin{tikzpicture}
\draw (0,0) -- (0,1) -- (3,1) -- (3,0) -- (0,0) ;
\draw (1.5,0.4) node {$G'_k$};
\fill (1,1) circle(0.1); \draw (1,1) node[anchor = south] {$s_k'$};
\fill (3,0) circle(0.1); \draw (3,0) node[anchor = north] {$r_k'$};

\draw (4,0) -- (4,1) ;
\fill (4,0) circle(0.1);
\fill (4,1) circle(0.1);

\draw (5,0) -- (5,1) -- (8,1) -- (8,0) -- (5,0) ;
\draw (6.5,0.4) node {$G''_k$};
\fill (7,1) circle(0.1); \draw (7,1) node[anchor = south] {$s_k''$};
\fill (5,0) circle(0.1); \draw (5,0) node[anchor = north] {$r_k''$};

\draw (9,0) -- (9,1) -- (12,1) -- (12,0) -- (9,0) ;
\draw (10.5,0.4) node {$G'''_k$};
\fill (10,1) circle(0.1); \draw (10,1) node[anchor = south] {$s_k'''$};
\fill (12,0) circle(0.1); \draw (12,0) node[anchor = north] {$r_k''' = r_{k+1}$};

\draw[dash pattern = on 2pt off 2pt,blue, very thick] (1,1) -- (3,0);
\draw[dash pattern = on 2pt off 2pt,blue, very thick] (7,1) -- (5,0);
\draw[dash pattern = on 2pt off 2pt,blue, very thick] (10,1) -- (12,0);
\draw[blue, very thick] (3,0) -- (4,0) -- (5,0);

\draw[blue, very thick] (1,1) .. controls (2.5,1.5) .. (4,1);
\draw[blue, very thick] (7,1) .. controls (8.5,1.5) .. (10,1);

\fill (4,1) circle(0.1); \draw (4,1) node[anchor = south] {$s_{k+1}$};

\fill (1,1) circle(0.1); 
\fill (3,0) circle(0.1); 
\fill (4,0) circle(0.1);
\fill (4,1) circle(0.1);
\fill (7,1) circle(0.1); 
\fill (5,0) circle(0.1); 
\fill (10,1) circle(0.1);
\fill (12,0) circle(0.1);
\fill (4,1) circle(0.1); 

\end{tikzpicture}
\caption{The recursive construction of a partial greedy tour for the instance $G_{k+1}$. The dashed lines 
indicate partial greedy tours in $G_k'$, $G_k''$, and $G_k'''$. }
\label{fig:Gk} 
\end{figure}

By definition $s_{k+1}$ is the vertex at position $\frac12(3^{k+2}+1)$ in the top row of $V_{k+1}$.
As we have  $\frac12(3^{k+2}+1) = \frac12(3^{k+2}-1)+1 =  \frac12|V_k|+1 $ this is the first vertex in the top row
behind $G_k'$. 

We now claim that a partial greedy tour for $G_{k+1}$ may look as follows (see the bold edges in Figure~\ref{fig:Gk}):
it contains the three recursively defined partial greedy tours in $G_k'$, $G_k''$, and $G_k'''$ plus four additional edges.
These four additional edges are the two edges of length one that leave $r_k'$ and $r_k''$ to the right respectively to
the left plus the two edges $\{s_k',s_{k+1}\}$ and $\{s_k'',s_k'''\}$. 
As  $s_k'$ is at position $\frac12(3^{k+1}+1)$ in the top row and $s_{k+1}$ is at position $\frac12(3^{k+2}+1)$ in the top row,
we have that the edge $\{s_k',s_{k+1}\}$ has length 
$$ \frac12(3^{k+2}+1) - \frac12(3^{k+1}+1) = 3^{k+1}.$$
The edge $\{s_k'',s_k'''\}$ has length
$$\frac12(3^{k+1}+1) + \frac12(3^{k+1}+1) -1 = 3^{k+1}.$$ 

Thus, we have constructed a partial tour that satisfies (a), (b), and (d) and its total length is
$$ 3\cdot ((2k+8)\cdot 3^k-1) + 2 + 2\cdot 3^{k+1} = (2(k+1)+8)\cdot 3^{k+1}-1 $$
and therefore also (c) holds. We still have to prove that this partial tour is contained in some greedy tour.
The two edges of length one that leave $r_k'$ and $r_k''$ to the right respectively to
the left may be chosen by the greedy algorithm at the very beginning. Afterwards, by induction
the partial greedy tours within $G_k'$, $G_k''$, and $G_k'''$ may be chosen. By induction, the
longest edge that has to be considered by the greedy algorithm up to this step has length $3^k$. 
We now claim that all edges that are incident to the vertices $s_k', s_k'', s_k'''$, and $s_{k+1}$  and that
may be added by the greedy algorithm have length at least $3^{k+1}$. For the vertices 
$s_k'', s_k'''$, and $s_{k+1}$ this is immediately clear, as any vertex not belonging to $G_{k+1}$ has distance at least 
$3^{k+1}$ to these vertices. We still have to rule out that $s_k'$ has a neighbor to the left of $G_{k+1}$ within a distance
smaller than $3^{k+1}$. But if there exists a neighbor to the left of $G_{k+1}$ then by the recursive construction
$G_{k+1}$ is contained in an embedding of $G_{k+2}$ and this implies that to the left of $G_k'$ there is
a mirrored copy of $G_k$. Thus the next neighbor to the left of $s_k'$ has distance at least $3^{k+1}$.
\end{proof}

\begin{theorem}
\label{thm:main}
On graphic, euclidean, and rectilinear TSP instances with $n$ cities
the approximation ratio of the greedy algorithm is $\Theta(\log n)$.
\end{theorem}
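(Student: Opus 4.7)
The plan is to prove matching $O(\log n)$ and $\Omega(\log n)$ bounds separately. For the upper bound I would simply invoke Frieze's 1979 theorem~\cite{Fri1979}, which gives approximation ratio $O(\log n)$ for the greedy algorithm on every metric TSP instance; since graphic, euclidean, and rectilinear instances are all metric, nothing more is needed here.

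For the matching lower bound the strategy is to apply Lemma~\ref{lem:Gk} in the case $m=k$, so that the partial greedy tour produced by the lemma already visits every city of $G_k$. That partial greedy tour is then a Hamiltonian path of length $(2k+8)\cdot 3^k-1$ and, by the very definition of a partial greedy tour, extends to some actual greedy tour on $G_k$. Hence the greedy algorithm is entitled to output a tour of total length at least $(2k+8)\cdot 3^k-1$ on this instance.

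To turn this into a lower bound on the approximation ratio I would bound $\mathrm{OPT}(G_k)$ from above by exhibiting an explicit short reference tour. Because $V_k$ sits on a $2\times\tfrac{1}{2}(3^{k+2}-1)$ subgrid of $\mathbb{Z}^2$, the perimeter of this grid is a Hamiltonian cycle whose edges all have length exactly $1$ in each of the three metrics under consideration (condition~(i) covers all three cases uniformly), giving total length $3^{k+2}-1$. Combining this upper bound on $\mathrm{OPT}(G_k)$ with the greedy lower bound, the identity $n=|V_k|=3^{k+2}-1$ from~(\ref{eq:Vk}), and the relation $k=\Theta(\log n)$, immediately yields the desired $\Omega(\log n)$ bound on the ratio.

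The real technical work of the argument is already packaged inside Lemma~\ref{lem:Gk}, which has been proved above. Beyond that, the only point requiring any genuine attention is verifying that a single short perimeter tour certifies a small optimum simultaneously for the graphic, euclidean, and rectilinear versions of $G_k$; this follows at once from condition~(i) together with the grid geometry, so I expect no serious obstacle in carrying out this plan.
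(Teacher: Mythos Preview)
Your proposal is correct and follows essentially the same route as the paper's own proof: cite Frieze~\cite{Fri1979} for the upper bound, invoke Lemma~\ref{lem:Gk} with $m=k$ for a greedy tour of length at least $(2k+8)\cdot 3^k-1$, and compare against the length-$n$ perimeter tour of the $2\times\tfrac12(3^{k+2}-1)$ grid to obtain the $\Omega(\log n)$ ratio. Your write-up is in fact slightly more explicit than the paper's, which simply asserts that the optimum has length~$n$ without spelling out the perimeter tour.
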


\begin{proof}
The upper bound is proven in~\cite{Fri1979}. For the lower bound we apply Lemma~\ref{lem:Gk}.
The instance $G_k$ defined above has $n:=3^{k+2}-1$ cities and an optimum TSP tour 
in $G_k$ has length $n$. As shown in Lemma~\ref{lem:Gk} there exists a partial greedy tour in $G_k$ 
of length at least $(2k+8)\cdot 3^k-1$. Thus, for $k\ge 1$ the approximation 
ratio of the greedy algorithm is greater than  
$$\frac {(2k+8)\cdot 3^k-1}{3^{k+2}-1} ~\ge~ \frac {(2k+8)\cdot 3^k}{3^{k+2}} ~=~\frac{2k+8}{9}
~\ge~\frac29 \cdot \log_3 (n + 1).$$
\end{proof}

Conditions (i) and (ii) are satisfied whenever
the distances in $G_k$ are defined by an $L^p$-norm. Thus Theorem~\ref{thm:main} not only holds
for the $L^2$- and the $L^1$-norm but for all $L^p$-norms.

\subsection{The 1-2-TSP}

For completeness we also provide the approximation ratio of the greedy algorithm for the 1-2-TSP.

\begin{theorem}
\label{thm:1-2-tsp}
The approximation ratio of the greedy algorithm for the 1-2-TSP is $\frac32 - \frac{1}{2n}$
on instances with $n$ cities.
\end{theorem}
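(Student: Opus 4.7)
For the upper bound, the plan is to parametrise by the numbers $p, p^*$ of length-$2$ edges in a greedy tour $T$ and an optimum tour $T^*$. Since every edge has length $1$ or $2$, we have $c(T) = n + p$ and $c(T^*) = n + p^*$, so the claim is equivalent to
$$2np \;\le\; n(n-1) + (3n-1)p^*.$$
I will analyse the set $S$ of length-$1$ edges picked by greedy. Because greedy processes length-$1$ edges before any length-$2$ edge and selects every length-$1$ edge that keeps the current subgraph extendable to a Hamilton cycle, $S$ is maximal with this property: $|S| = n - p$ and $S$ forms $p$ vertex-disjoint paths covering $V$, and for every length-$1$ edge $e \notin S$ either (i) some endpoint of $e$ has $S$-degree $2$ (is interior to an $S$-path), or (ii) $e$ connects the two endpoints of the same $S$-path. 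A small but crucial consequence is that \emph{no} length-$1$ edge joins endpoints of two distinct $S$-paths, for otherwise greedy would have picked it.

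The core of the proof will be a charging argument comparing $S$ with $O := T^* \cap E_1$, which has size $n - p^*$. For each $e \in O \setminus S$ I charge $e$ to an interior $S$-vertex of $e$ in case (i) and to the host $S$-path in case (ii). Since every vertex has $T^*$-degree exactly $2$, each interior $S$-vertex absorbs at most $2$ charges; since each $S$-path admits at most one ``chord'' between its two endpoints, type-(ii) charges are bounded by the number of $S$-paths of size $\ge 2$. Combining these bounds with the no-inter-path-endpoint-edge observation and the identity $|O| = |S \cap O| + |O \setminus S|$ gives, after some algebra, the displayed inequality. The main obstacle will be to keep the bookkeeping tight enough so that the constant $(3n-1)/(2n)$ emerges rather than a weaker $3/2$; the extra saving comes precisely from excluding length-$1$ edges between endpoints of different $S$-paths.

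For the lower bound, I will exhibit an explicit instance achieving the ratio $(3n-1)/(2n)$. For odd $n = 2m+1$, take $K_n$ and select $m-1$ distinguished ``special'' vertices; declare every edge between two non-special vertices to have length $1$, give each special vertex exactly two length-$1$ edges (to prescribed non-special neighbours), and set all remaining edges to length $2$. An optimum tour uses the Hamilton cycle that visits each special vertex through its two length-$1$ neighbours and so has length $n$. On the other hand, an adversarial tie-breaking of greedy first selects edges among the non-special vertices in such a way that each special vertex's two length-$1$ neighbours become interior vertices of a long path; afterwards every length-$1$ edge incident to a special vertex is blocked, and all $m-1$ special vertices remain as isolated singletons in $S$. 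This yields $p = m = (n-1)/2$, hence $c(T) = (3n-1)/2$ and the ratio $(3n-1)/(2n) = 3/2 - 1/(2n)$, matching the upper bound.
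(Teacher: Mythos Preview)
Your overall strategy for the upper bound is the same as the paper's: both hinge on the observation that every length-$1$ edge rejected by greedy is either incident to an interior vertex of some $S$-path or is the chord between the two endpoints of a single $S$-path. However, the specific charging you describe has a gap. Charging only $O\setminus S$ and then bounding $|S\cap O|$ separately (presumably by $|S|=n-p$) yields
\[
|O| \;\le\; |S| + 2(\text{interior}) + q \;=\; (n-p) + 2(n-p-q) + q \;=\; 3n-3p-q,
\]
which gives only $3p+q\le 2n+p^*$. This is too weak: for instance with $n=5$, $p^*=0$ it would still permit $p=3$ (take one $3$-path and two singletons, so $q=1$), whereas the theorem requires $p\le 2$. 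The slack arises because $S\cap O$ edges also consume $T^*$-degree at interior vertices, so the two bounds you add are never simultaneously tight. The fix is to charge \emph{all} of $O$ at once: every edge of $O$ (whether in $S$ or not) is either incident to an interior vertex or is the unique chord of some $S$-path of size $\ge 2$, giving
\[
|O|\;\le\;2(n-p-q)+q\;=\;2n-2p-q,
\]
hence $2p\le n-1+p^*$ (using $q\ge 1$), which is exactly the paper's inequality $k\le 2m-1$ and immediately yields the ratio $\tfrac32-\tfrac{1}{2n}$. So your ``no-inter-path-endpoint-edge'' observation is already fully spent in establishing the case~(i)/(ii) dichotomy; it does not provide a further saving beyond that.

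Your lower-bound construction is different from the paper's (the paper uses an odd cycle with extra chords $\{1,3\},\{3,5\},\dots$ on the odd vertices) but it works, provided you assign the special vertices' length-$1$ neighbours carefully: with $b_i$ adjacent to $a_i,a_{i+1}$ for $i=1,\dots,m-1$, the two non-special vertices $a_{m+1},a_{m+2}$ are not neighbours of any special vertex, so the adversary can make them the endpoints of the non-special Hamilton path, forcing every $b_i$ to become a singleton. You should make this neighbour assignment explicit, since a careless choice (e.g.\ one that uses all $m+2$ non-special vertices as neighbours) would leave no ``safe'' endpoints for the greedy path.
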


\begin{proof}
We first present an example showing that the approximation ratio is at least $\frac32 - \frac{1}{2n}$.
For an odd number $n$ with $n > 4$ consider the cycle on vertices $\{1,\ldots, n\}$. Add to this cycle
all edges $\{i,j\}$ with $|i-j|=2$ and $i$ and $j$ odd. All these edges get length one while all other edges get length~2.
Now the greedy algorithm may choose $(n+1)/2$ edges of length one by taking the edges 
$\{2,3\}$ and $\{1,n\}$ in addition to the edges $\{3,5\}, \{5,7\}, \ldots, \{n-2, n\}$.
No other edge of length one can be chosen by the greedy algorithm. Thus the total length of the tour returned
by the greedy algorithm is $(n+1)/2 + 2(n-1)/2 = (3n-1)/2$. As an optimum tour obviously has length $n$,
the approximation ratio of the greedy algorithm is at least $\frac32 - \frac{1}{2n}$. 

Now we prove that the approximation ratio of the greedy algorithm is never worse than $\frac32 - \frac{1}{2n}$.
Let $T$ be a tour found by the greedy algorithm and denote by $G_1$
the subgraph of $T$ that contains all edges of length one. Let $m$ be the number of edges in $G_1$.
We may assume that $T$ is not an optimum tour and therefore $G_1$ will consist of several connected components,
each of which must be a path. 
Let $e$ be an edge of length one contained in an optimum TSP tour. 
As $e$ was not added by the greedy algorithm, it must either be incident to a vertex of degree two in $G_1$ or
it connects two vertices of degree one in the same connected component of $G_1$.
For each connected component $C$ of $G_1$ of size at least two we can have at most
$2(|C|-2)+1 = 2(|C|-1)-1$ edges of the optimum tour that have length one and are either incident to a vertex of degree two in $C$
or connect two vertices of degree one in $C$. 
By summing over all connected components of size at least two in $G_1$ we obtain $k \le 2 m -1$ 
where $k$ is the number of length one edges in an optimum TSP tour.
As the greedy tour has length $2n-m \le 2n - (k+1)/2$ and the optimum tour has length $2n-k$
we conclude that the approximation ratio of the greedy algorithm is at most 
$$\frac{2n-(k+1)/2}{2n-k} \le  \frac{2n-(n+1)/2}{2n-n} = \frac32 - \frac{1}{2n}.$$
\end{proof}

\section{The Approximation Ratio of the Clarke-Wright Savings Heuristic}
\label{Sec:CW}

\begin{theorem}
\label{thm:CW}
For metric TSP instances with $n$ cities
the approximation ratio of the Clarke-Wright savings heuristic is $\Theta(\log n)$.
\end{theorem}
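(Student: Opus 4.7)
Since Ong and Moore~\cite{OM1984} proved the upper bound $O(\log n)$, I focus on the matching lower bound $\Omega(\log n)$. My plan is to piggy-back on Lemma~\ref{lem:Gk} by extending the greedy lower-bound instance with a single additional hub city. Define the metric TSP instance $H_k$ on $V_k \cup \{x\}$ by keeping the rectilinear distances inside $V_k$ and setting $c(a,x) = D$ for every $a \in V_k$, where $D := 3^{k+1}$. Since $2D$ exceeds the diameter of $V_k$, this extension satisfies the triangle inequality.

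The decisive observation is that for every pair of non-hub cities the savings equal
$$ s(a,b) \;=\; c(a,x) + c(b,x) - c(a,b) \;=\; 2D - c(a,b), $$
so the nonincreasing savings order coincides with the nondecreasing edge-length order---precisely the order in which the greedy algorithm considers edges in Section~\ref{Sec:Greedy}. Moreover, Clarke-Wright's feasibility condition for a short-cut (no cycle on $V_k$, no vertex exceeding degree two) matches the greedy feasibility condition for adding an edge. A direct induction on the number of processed pairs then shows that, with matching tie-breaking, Clarke-Wright on $H_k$ accepts exactly the same non-hub edges that a corresponding run of the greedy algorithm on $V_k$ would add, with one exception: the edge by which greedy closes its Hamiltonian cycle is necessarily rejected by Clarke-Wright, since such a short-cut would close a cycle on $V_k$ and isolate $x$.

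Applying Lemma~\ref{lem:Gk} with $m=k$ yields a run of greedy whose partial greedy tour is a Hamiltonian path from $s_k$ to $r_k$ of length $(2k+8)\cdot 3^k - 1$, with all edges of length at most $3^k$. By the correspondence above, Clarke-Wright on $H_k$ produces a tour whose non-hub portion is exactly this Hamiltonian path; its two hub edges have length $D$ each, giving total length $(2k+8)\cdot 3^k - 1 + 2D = (2k+14)\cdot 3^k - 1$. Any tour on $H_k$ must use two edges of length $D$ incident to $x$ together with a Hamiltonian path on $V_k$ of length at least $|V_k|-1 = 3^{k+2}-2$, achieved by a snake path on the grid. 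Hence $\mathrm{OPT}(H_k) = 2D + 3^{k+2} - 2 = 15\cdot 3^k - 2$, and with $n = |V_k \cup \{x\}| = 3^{k+2}$ the approximation ratio is at least
$$ \frac{(2k+14)\cdot 3^k - 1}{15\cdot 3^k - 2} \;=\; \Omega(k) \;=\; \Omega(\log n). $$

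The step I expect to require the most care is the residual behaviour of Clarke-Wright after the $|V_k|-1$ ``greedy-mimicking'' short-cuts. At that moment the non-hub subgraph is exactly the Hamiltonian path $P$ with endpoints $s_k$ and $r_k$, and every remaining pair either contains at least one endpoint of $P$ of degree two, or is precisely $\{s_k,r_k\}$ whose short-cut closes a cycle on $V_k$. Consequently no further short-cut is accepted and Clarke-Wright terminates with exactly the tour described above. Ties in the savings ordering are broken in the same ``can be the output of'' sense used for greedy in Lemma~\ref{lem:Gk}.
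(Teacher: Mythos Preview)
Your argument is correct and follows essentially the same route as the paper: adjoin a hub $x$ at uniform distance from every city of $V_k$, observe that equal hub-distances make the savings order coincide with the nondecreasing edge-length order so that Clarke--Wright replays a greedy run on $V_k$, and invoke Lemma~\ref{lem:Gk}. The paper chooses $D=\tfrac12\cdot 3^{k+2}$ rather than your $D=3^{k+1}$ and bounds the optimum a bit more loosely, but the construction and the mechanism are the same; your write-up is in fact more explicit about why no further short-cut is accepted once the Hamiltonian path $P$ is in place.
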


\begin{proof}
The upper bound is proven in~\cite{OM1984}. For the lower bound extend the construction used
to prove Lemma~\ref{lem:Gk}. Let $G_k$ be the graph as defined above and 
let $x$ be a new vertex that is connected to all vertices in $G_k$. 
Take the graphic instance induced by $G_k$ and define the length of all edges incident to $x$
as $\frac12\cdot 3^{k+2}$. Note that this value is larger than any distance
within the graph $G_k$. 

Now start the Clarke-Wright savings heuristic on this instance with city $x$. 
As all edges incident to $x$ have the same length, the short-cutting of a pair results in a reduction of the edge length that is
only dependent on the length of the edge between the two cities in the pair. Therefore, the Clarke-Wright savings heuristic
may choose the pairs to short-cut exactly in the same order as the greedy algorithm applied to $G_k$. 
Thus, we just have taken into account that the greedy tour and an optimum tour contain two edges of total length 
$3^{k+2}$ incident with city $x$. Similarly as in the proof of Theorem~\ref{thm:main} 
we get that for $k\ge 1$ the approximation 
ratio of the Clarke-Wright savings heuristic is greater than  
$$\frac {(2k+8)\cdot 3^k-1 + 3^{k+2}}{3^{k+2}-1 + 3^{k+2}} ~\ge~ \frac {(2k+17)\cdot 3^k}{2\cdot 3^{k+2}} ~=~\frac{2k+17}{18}
~\ge~\frac19 \cdot \log_3 n.$$ 
\end{proof}

\bibliographystyle{plain}
\bibliography{Literatur}

\end{document}